\documentclass[11pt]{article}
\usepackage[margin=1in]{geometry}

\usepackage{amsmath,amssymb,amsthm}
\usepackage{graphicx}
\usepackage{url}
\usepackage{cite}

\newtheorem{theorem}{Theorem}
\newtheorem{proposition}[theorem]{Proposition}
\newtheorem{lemma}[theorem]{Lemma}
\newtheorem{corollary}[theorem]{Corollary}
\theoremstyle{remark}
\newtheorem{remark}[theorem]{Remark}

\newcommand{\Psimat}{\boldsymbol{\Psi}}
\newcommand{\Lmat}{\widehat{\mathcal{L}}}
\newcommand{\Imat}{\mathbf{I}}
\newcommand{\Amat}{\mathbf{A}}
\newcommand{\yvec}{\mathbf{y}}
\newcommand{\xvec}{\mathbf{x}}
\newcommand{\Poly}{\mathcal{P}}
\newcommand{\Krylov}{\mathcal{K}}
\newcommand{\R}{\mathbb{R}}

\begin{document}

\title{Unrolling a Graph-Laplacian Denoiser\\
Realizes Only Compositions of Polynomial Graph Filters}

\author{Seyed~Alireza~Hosseini%
\thanks{Preprint. Submitted to IEEE Signal Processing Letters.}%
\thanks{The author is an Independent Researcher (e-mail:
alxhosse@gmail.com), formerly with the Department of Electrical Engineering
and Computer Science, York University, Toronto, ON, Canada.}%
\thanks{The author is the first author of the work analyzed in this
paper~[3], performed during his graduate studies at York
University. He declares no financial or commercial competing interests, and
no funding was received for this work.}}

\date{}
\maketitle

\begin{abstract}
A recent construction of unrolled networks for graph-based image restoration
forms a system matrix from a graph-Laplacian denoiser through a truncated
Taylor expansion, then inverts it with a fixed number of conjugate-gradient
steps, with the coefficients of both stages learned. This paper shows the resulting
map is a polynomial in the denoising operator, of degree at most the product
of the two truncation orders, for every setting of those coefficients and
therefore at every point of training: the learned steps select an element of a
Krylov subspace they cannot enlarge. At the orders used in practice the
reachable set is moreover a measure-zero subset of the polynomial class of the
network's own degree budget, so the composition constrains the hypothesis
space rather than enlarging it. At the standard initialization the realized spectral response is
obtained in closed form, exceeding the intended response throughout the
interior of the spectrum and approaching a nonzero floor. A lower bound on the
operator's condition number, internal to the graph construction rather than to
image content, then places the order required for a prescribed accuracy well
above the order used in practice. The confining class is precisely the spectral
graph filters for which a direct, convex parameterization has long been
available.
\end{abstract}

\medskip\noindent\textbf{Keywords:} Algorithm unrolling, Bernstein polynomials, conjugate gradient, graph filters,
graph signal processing, image denoising, spectral analysis.
\medskip

\section{Introduction}

Algorithm unrolling turns a fixed number of iterations of an
optimization scheme into a network whose per-iteration constants become
learned parameters~\cite{monga2021}. The construction studied here solves a
graph smoothing problem whose system matrix is built from a graph-Laplacian
denoiser $\Psimat$, unrolling two nested truncations: an order-$K$ Taylor
expansion forming that matrix from repeated applications of $\Psimat$, and an
$m$-step conjugate-gradient (CG) solver. Freeing the coefficients of both is
expected to lift the network above the scheme it came from.

This paper characterizes what that freedom buys. The denoiser--graph-filter
correspondence originates with~\cite{chan2019} and underlies the result quoted
as Theorem~1 in~\cite{gdd2024}, itself taken from~\cite{thm1source2023};
following it one step further, through the unrolling rather than around it, is
the content of Section~\ref{sec:collapse}.

The ingredients are individually standard: CG iterates lie in a Krylov
subspace~\cite{golub2013}, and polynomials in $\Psimat$ are graph
filters~\cite{shuman2013,hammond2011}. Specific here is the composition ---
a learned expansion nested inside a learned solver --- and that the nesting
confines the learning rather than freeing it. The class this construction
parameterizes indirectly, and incompletely, is the one Chebyshev-basis graph
convolution has parameterized directly for a
decade~\cite{hammond2011,defferrard2016}; this paper adds a proof of the
correspondence and a measurement of what the indirect route costs. The boundary is worth stating at the outset: only learning
that alters $\Psimat$ itself leaves the polynomial class, so a module that
estimates the graph does something a learned solver
cannot~\cite{hosseini2025thesis}. Contributions:
\begin{enumerate}
\item The map is $P(\Psimat)$ with $\deg P \le K(m-1)$ for \emph{any} learned
      coefficients (Theorem~\ref{thm:collapse}), and for $K(m-2) > 2m$ the
      reachable set is a strict, measure-zero subset of that class
      (Proposition~\ref{prop:measure}).
\item The initialization is a leaky denoiser, its response $g_K(\lambda)$
      bounded below by $1/(K+1)$ (Proposition~\ref{prop:leak}).
\item $\kappa(\Psimat) > 1+\mu$ is proved for the shift-invariant case
      (Proposition~\ref{prop:wmin}), whose prediction reproduces the $\kappa$
      measured on image patches to $0.34\%$, putting the order required far
      above the one used.
\item A Bernstein parameterization spans the class with a structural
      nonexpansiveness guarantee (Proposition~\ref{prop:bernstein}).
\end{enumerate}

\section{Background and Setting}
\label{sec:setting}

Let $\Psimat \in \R^{N \times N}$ be the graph-Laplacian denoising operator,
symmetric, positive definite and nonexpansive, so that
\begin{equation}
\mathrm{spec}(\Psimat) \subset (0,1],
\qquad
\kappa(\Psimat) = \lambda_{\max}/\lambda_{\min} \le 1/\lambda_{\min}.
\label{eq:spec}
\end{equation}
It arises as $\Psimat = (\Imat + \mu\Lmat)^{-1}$ with $\Lmat$ the symmetrically
normalized graph Laplacian and $\mu > 0$ a smoothing weight, so the system
matrix of the associated restoration problem is
\begin{equation}
\Amat = \Imat + \mu\Lmat = \Psimat^{-1}.
\label{eq:sysmat}
\end{equation}
\begin{lemma}
\label{lem:kappa}
Let $\Lmat = \Imat - \mathbf{W}$ with
$\mathbf{W} = \mathbf{S}^{-1/2}\mathbf{B}\mathbf{S}^{-1/2}$, $\mathbf{B}$
symmetric with non-negative entries and $\mathbf{S} = \mathrm{diag}(\mathbf{B}
\mathbf{1})$ positive. Then $\mathbf{W}\mathbf{S}^{1/2}\mathbf{1} =
\mathbf{S}^{1/2}\mathbf{1}$ and, $\mathbf{W}$ being similar to the
row-stochastic $\mathbf{S}^{-1}\mathbf{B}$, $\mathrm{spec}(\mathbf{W})
\subset [-1,1]$; so $\lambda_{\max}(\Psimat) = 1$ exactly and
\begin{equation}
\kappa(\Psimat) = 1/\lambda_{\min}(\Psimat) = 1 + \mu\,(1 - w_{\min}),
\label{eq:kappa}
\end{equation}
with $w_{\min}$ the smallest eigenvalue of $\mathbf{W}$.
\end{lemma}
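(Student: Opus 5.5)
The argument checks the eigenvector identity directly, gets the spectral inclusion by similarity, and then pushes the spectrum of $\mathbf{W}$ through the maps $\mathbf{W}\mapsto\Lmat\mapsto\Psimat$.

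\emph{Step 1 (eigenvector).} Since $\mathbf{S}^{-1/2}\mathbf{S}^{1/2}=\Imat$ and $\mathbf{S}$ is defined by $\mathbf{B}\mathbf{1}=\mathbf{S}\mathbf{1}$,
\[
\mathbf{W}\mathbf{S}^{1/2}\mathbf{1}=\mathbf{S}^{-1/2}\mathbf{B}\mathbf{1}=\mathbf{S}^{-1/2}\mathbf{S}\mathbf{1}=\mathbf{S}^{1/2}\mathbf{1}.
\]
Positivity of $\mathbf{S}$ makes $\mathbf{S}^{1/2}\mathbf{1}\neq 0$, so $1$ is an eigenvalue of $\mathbf{W}$.

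\emph{Step 2 (spectral inclusion).} We have $\mathbf{S}^{-1}\mathbf{B}=\mathbf{S}^{-1/2}\mathbf{W}\mathbf{S}^{1/2}$, so $\mathbf{W}$ and $\mathbf{S}^{-1}\mathbf{B}$ share a spectrum. The matrix $\mathbf{S}^{-1}\mathbf{B}$ is non-negative with unit row sums, hence $\|\mathbf{S}^{-1}\mathbf{B}\|_\infty=1$. Every eigenvalue therefore has modulus at most $1$, by the bound $|\lambda|\le\|\cdot\|_\infty$ (equivalently, Gershgorin). Because $\mathbf{W}$ is symmetric, its eigenvalues are real, which gives $\mathrm{spec}(\mathbf{W})\subset[-1,1]$. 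Both $1$ and $w_{\min}$ lie in this spectrum, so $w_{\max}=1$ and $w_{\min}\ge -1$.

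\emph{Step 3 (transfer to $\Psimat$).} $\Lmat=\Imat-\mathbf{W}$ is symmetric with eigenvalues $1-w\in[0,2]$. Then $\Imat+\mu\Lmat$ has eigenvalues $1+\mu(1-w)\ge 1$, so it is positive definite and invertible. Consequently $\Psimat$ has eigenvalues $1/(1+\mu(1-w))\in(0,1]$, which is consistent with \eqref{eq:spec}.

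The map $w\mapsto 1/(1+\mu(1-w))$ is increasing because $\mu>0$. Hence:
\begin{itemize}
\item $\lambda_{\max}(\Psimat)$ is attained at $w=1$ and equals exactly $1$;
\item $\lambda_{\min}(\Psimat)$ is attained at $w=w_{\min}$ and equals $1/(1+\mu(1-w_{\min}))$.
\end{itemize}
Since $\Psimat$ is symmetric positive definite, $\kappa$ is the eigenvalue ratio, and \eqref{eq:kappa} follows.

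No step is a real obstacle. The only points that need care are using the symmetry of $\mathbf{W}$ (so its spectrum is real and $\kappa$ is an eigenvalue ratio), and noting that $\lambda_{\max}(\Psimat)=1$ is attained rather than only bounded, which is what Step 1 supplies.
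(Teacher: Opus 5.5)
Your proposal is correct and follows the same route the paper compresses into the lemma statement itself. You use the eigenvector identity $\mathbf{W}\mathbf{S}^{1/2}\mathbf{1}=\mathbf{S}^{1/2}\mathbf{1}$, then similarity to the row-stochastic $\mathbf{S}^{-1}\mathbf{B}$ to get $\mathrm{spec}(\mathbf{W})\subset[-1,1]$, then the increasing map $w\mapsto 1/(1+\mu(1-w))$ to read off $\lambda_{\max}(\Psimat)=1$ and $\lambda_{\min}(\Psimat)$, and you fill in details the paper leaves implicit: the $\|\cdot\|_\infty$ bound, realness of the spectrum from symmetry, and attainment of $w=1$.
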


Equation \eqref{eq:kappa} makes $\kappa$ one extremal eigenvalue of a sparse
matrix, $\Psimat$ never formed or inverted. For the combinatorial Laplacian, a Gershgorin bound $\kappa \le 1 + 2\mu d_{\max}$ is established in~\cite{deepglr2018} and used there to cap $\mu$ for solver stability; \eqref{eq:kappa} is its normalized-graph counterpart, as an equality, and Section~\ref{sec:numerics} rests on its lower rather than its upper end.

The network has two nested modules. The \emph{TSE module} forms the system
matrix by truncating, at order $K$, the Taylor expansion of $1/\lambda$ about
a point $s \neq 0$,
\begin{equation}
1/\lambda = \sum_{k \ge 0} (-1)^k (\lambda - s)^k / s^{k+1},
\label{eq:taylor}
\end{equation}
which by \eqref{eq:sysmat} is an expansion of $\Amat$ in powers of $\Psimat$;
its coefficients are then freed, giving \eqref{eq:tse} below. The \emph{CG
module} runs $m$ steps of conjugate gradient on $\Amat\xvec = \yvec$ from
$\xvec_0 = 0$. The parameters $\{a_k\}_{k \le K}$, $s$,
$\{\alpha_k\}_{k<m}$ and $\{\beta_k\}_{k<m-1}$ are all learned, and $\Psimat$
enters only as a black-box operator. Initialization is
$a_k = (-1)^k$, $s = 1$, i.e. the truncation of \eqref{eq:taylor} itself.

\begin{remark}[$\mu$ does not appear]
By \eqref{eq:sysmat} the system matrix equals $\Psimat^{-1}$ identically, so
$\mu$ enters the computation only through $\Psimat$ itself and is not a free
parameter of the unrolled map.
\end{remark}

\section{The Unrolled Map Is a Polynomial in $\Psimat$}
\label{sec:collapse}

\begin{proposition}[TSE module]
\label{prop:tse}
For any $a_0,\dots,a_K \in \R$ and any $s \neq 0$,
\begin{equation}
t_K(\Psimat) = \sum_{k=0}^{K} \frac{a_k}{s^{k+1}}(\Psimat - s\Imat)^k
\label{eq:tse}
\end{equation}
is a polynomial in $\Psimat$ of degree at most $K$. Conversely, with $s$ fixed,
every polynomial of degree at most $K$ is realized by exactly one choice of
$(a_0,\dots,a_K)$. The module's reachable set is therefore exactly
$\Poly_K[\Psimat]$.
\end{proposition}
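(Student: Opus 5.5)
The proof works at the level of scalar polynomials and then lifts to matrices. Since every matrix function in \eqref{eq:tse} is built only from $\Psimat$ and $\Imat$, it suffices to study the scalar polynomial
\[
t(\lambda) = \sum_{k=0}^{K} \frac{a_k}{s^{k+1}}(\lambda - s)^k .
\]
The key observation is that $\{(\lambda-s)^k\}_{k=0}^K$ is a basis of $\Poly_K$, the real polynomials of degree at most $K$. Evaluation at $\Psimat$ is an algebra homomorphism from $\R[\lambda]$ to matrices; it sends $\lambda$ to $\Psimat$ and $1$ to $\Imat$. Hence $t_K(\Psimat)$ is exactly $t$ evaluated at $\Psimat$.

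\textbf{Forward direction.} Expand each $(\lambda - s)^k$ by the binomial theorem. This writes $t(\lambda)=\sum_{j=0}^K c_j\lambda^j$ with
\[
c_j = \sum_{k=j}^{K} \frac{a_k}{s^{k+1}}\binom{k}{j}(-s)^{k-j}.
\]
Every term has degree at most $K$, so $t_K(\Psimat)=\sum_j c_j\Psimat^j\in\Poly_K[\Psimat]$. This holds for every $s\neq0$ and every choice of the $a_k$. The degree can drop below $K$, for instance when $a_K=0$, which is why the statement says "at most".

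\textbf{Converse, existence.} Fix $s\neq0$ and take any $p\in\Poly_K$. Its Taylor expansion about $s$ is finite and exact:
\[
p(\lambda)=\sum_{k=0}^K \frac{p^{(k)}(s)}{k!}(\lambda-s)^k .
\]
Setting $a_k = s^{k+1}p^{(k)}(s)/k!$ therefore realizes $p$.

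\textbf{Converse, uniqueness.} The map $(a_0,\dots,a_K)\mapsto(c_0,\dots,c_K)$ is linear, and its matrix is triangular. The diagonal entries are $c_k \leftarrow a_k/s^{k+1}$, which are nonzero because $s\neq0$. So the map is invertible, and each polynomial comes from exactly one coefficient vector.

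The main obstacle is making "exactly one" precise. It is exact at the level of polynomials. At the level of matrices, two distinct polynomials can agree on $\mathrm{spec}(\Psimat)$ when $\Psimat$ has fewer than $K+1$ distinct eigenvalues, and then they give the same matrix. I would state uniqueness for the polynomial $t$, and note that it transfers to the operator $t_K(\Psimat)$ whenever the minimal polynomial of $\Psimat$ has degree greater than $K$. That is the generic situation for graph operators with $N\gg K$.

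With that caveat, the reachable set $\{t_K(\Psimat)\}$ equals $\{p(\Psimat):p\in\Poly_K\}=\Poly_K[\Psimat]$. Letting $s$ vary as well adds nothing, because each fixed $s$ already reaches the whole class.
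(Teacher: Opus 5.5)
Your proof is correct and follows essentially the paper's route: $\{(\lambda-s)^k\}_{k=0}^{K}$ is a basis of $\Poly_K$, the rescaling $a_k/s^{k+1}$ is invertible for $s\neq 0$, and evaluation at $\Psimat$ carries this over to matrices. Your binomial coefficients, Taylor-formula existence and triangular uniqueness argument just make the paper's ``distinct degrees'' step explicit. Your caveat is also a genuine sharpening: substitution preserves spanning, so the reachable set is $\Poly_K[\Psimat]$ as claimed, but uniqueness of $(a_0,\dots,a_K)$ at the operator level holds only when the diagonalizable $\Psimat$ has at least $K+1$ distinct eigenvalues, a point the paper's ``substituting $\Psimat$ preserves this'' passes over.
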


\begin{proof}
The $K+1$ polynomials $\{(\lambda-s)^k\}_{k=0}^{K}$ have distinct degrees,
hence are independent and span $\Poly_K$; the rescaling $a_k/s^{k+1}$ is
invertible for $s \neq 0$. Polynomials in a fixed matrix form a commutative
algebra, so substituting $\Psimat$ preserves this.
\end{proof}

\begin{theorem}[Collapse]
\label{thm:collapse}
Run the $m$-step CG module on $t_K(\Psimat)\xvec = \yvec$ from $\xvec_0 = 0$
with arbitrary step sizes $\{\alpha_k\}_{k<m}$ and $\{\beta_k\}_{k<m-1}$,
learned or otherwise. Then the output satisfies
\begin{equation}
\xvec_m = P(\Psimat)\,\yvec,
\qquad
\deg P \le K(m-1),
\label{eq:collapse}
\end{equation}
for every setting of the learned coefficients.
\end{theorem}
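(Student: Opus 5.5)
The plan is to prove \eqref{eq:collapse} by induction over the CG recursion, tracking everything as polynomials in the single matrix $\mathbf{T} := t_K(\Psimat)$ and only at the end substituting $\mathbf{T}$ as a polynomial in $\Psimat$. I will write the $m$-step CG module in its standard three-vector form from $\xvec_0 = 0$:
\begin{equation*}
\mathbf{r}_0 = \yvec,\quad \mathbf{p}_0 = \mathbf{r}_0,\quad
\xvec_{j+1} = \xvec_j + \alpha_j \mathbf{p}_j,\quad
\mathbf{r}_{j+1} = \mathbf{r}_j - \alpha_j \mathbf{T}\mathbf{p}_j,\quad
\mathbf{p}_{j+1} = \mathbf{r}_{j+1} + \beta_j \mathbf{p}_j .
\end{equation*}
This uses $\alpha_0,\dots,\alpha_{m-1}$ and $\beta_0,\dots,\beta_{m-2}$, which matches the index ranges in the statement. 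The $\alpha_j,\beta_j$ are treated as scalars, as they are when learned.

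The induction claim for $j \ge 1$ is that there exist polynomials $X_j, R_j, Q_j$ with $\deg X_j \le j-1$ and $\deg R_j,\deg Q_j \le j$ such that
\begin{equation*}
\xvec_j = X_j(\mathbf{T})\yvec,\qquad \mathbf{r}_j = R_j(\mathbf{T})\yvec,\qquad \mathbf{p}_j = Q_j(\mathbf{T})\yvec.
\end{equation*}
The base case is immediate: $\xvec_1 = \alpha_0\yvec$ has degree $0$, while $\mathbf{r}_1 = (\Imat - \alpha_0\mathbf{T})\yvec$ and $\mathbf{p}_1 = \mathbf{r}_1 + \beta_0\yvec$ have degree $\le 1$.

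For the inductive step, multiplying by $\mathbf{T}$ raises the degree by one, and scalar linear combinations do not raise it. Hence $\deg R_{j+1} \le j+1$ and $\deg Q_{j+1} \le j+1$. Since $\xvec_{j+1} = \xvec_j + \alpha_j\mathbf{p}_j$ involves $\mathbf{p}_j$ but never $\mathbf{T}\mathbf{p}_j$, we get $\deg X_{j+1} \le \max(j-1, j) = j$. This one-step lag of $\xvec$ behind $\mathbf{r}$ is exactly what produces the factor $m-1$ rather than $m$.

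Next, by Proposition~\ref{prop:tse}, $\mathbf{T} = t(\Psimat)$ for a polynomial $t$ with $\deg t \le K$, valid for any $a_k$ and any $s \neq 0$. The composition $P := X_m \circ t$ is then a polynomial with $\deg P \le K(m-1)$. The identity $X_m(t(\Psimat)) = P(\Psimat)$ holds because polynomials in the fixed matrix $\Psimat$ form a commutative algebra, and evaluation is an algebra homomorphism. This yields $\xvec_m = P(\Psimat)\yvec$. Nothing in the argument used any property of the coefficients beyond their being scalars, so the conclusion holds for every setting of the learned coefficients. Equivalently, $\xvec_m$ lies in the Krylov subspace $\Krylov_m(\mathbf{T},\yvec) \subseteq \Krylov_{K(m-1)+1}(\Psimat,\yvec)$.

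The main obstacle is not the algebra but pinning down the model of the module. The exact form of the recursion determines the degree count, and the staggered indexing is what gives $m-1$. If the $\alpha_j,\beta_j$ were instead computed from residual inner products, as in textbook CG, they would depend on $\yvec$. In that case I would note that the induction still goes through verbatim, with $\yvec$-dependent scalars. It then gives $\xvec_m \in \Krylov_{K(m-1)+1}(\Psimat,\yvec)$, that is, $\xvec_m = P_{\yvec}(\Psimat)\yvec$ with the same degree bound.
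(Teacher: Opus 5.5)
Your proposal is correct and follows essentially the same argument as the paper. The paper proves, by induction over the CG recursion, that $\xvec_j \in \Krylov_j$ and $\mathbf{r}_j,\mathbf{p}_j \in \Krylov_{j+1}$, which is exactly what your degree bounds on $X_j$, $R_j$, $Q_j$ encode, and then composes the resulting degree-$(m-1)$ polynomial with $t_K$, of degree at most $K$ by Proposition~\ref{prop:tse}.
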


\begin{proof}
Write $\Amat_K = t_K(\Psimat)$ and
$\Krylov_j = \mathrm{span}\{\yvec, \Amat_K\yvec, \dots,
\Amat_K^{\,j-1}\yvec\}$. We claim $\xvec_k \in \Krylov_k$,
$\mathbf{r}_k \in \Krylov_{k+1}$ and $\mathbf{p}_k \in \Krylov_{k+1}$ for all
$k$. This holds at $k=0$, where $\xvec_0 = 0$ and
$\mathbf{r}_0 = \mathbf{p}_0 = \yvec$. Assuming it at $k$, and using
$\Amat_K \Krylov_{k+1} \subseteq \Krylov_{k+2}$,
\begin{align*}
\xvec_{k+1} &= \xvec_k + \alpha_k \mathbf{p}_k \in \Krylov_{k+1}, \\
\mathbf{r}_{k+1} &= \mathbf{r}_k - \alpha_k \Amat_K \mathbf{p}_k \in \Krylov_{k+2}, \\
\mathbf{p}_{k+1} &= \mathbf{r}_{k+1} + \beta_k \mathbf{p}_k \in \Krylov_{k+2}.
\end{align*}
The coefficients $\alpha_k, \beta_k$ appear only as scalars multiplying
vectors already in the subspace; the sole operation that raises the Krylov
index is multiplication by $\Amat_K$, and the recursion performs exactly one
per step whatever their values. Hence $\xvec_m \in \Krylov_m$, i.e.
$\xvec_m = q_{m-1}(\Amat_K)\yvec$ with $\deg q_{m-1} \le m-1$. By
Proposition~\ref{prop:tse}, $\deg t_K \le K$, so
$P = q_{m-1} \circ t_K$ has $\deg P \le K(m-1)$.
\end{proof}

As a statement about \emph{which} polynomials are reachable, the bound is
far from tight.

\begin{proposition}[The reachable set is thin]
\label{prop:measure}
The set realized by the block is
$\mathcal{R} = \{q \circ t : \deg q \le m-1,\ \deg t \le K\}$, the image of a
polynomial map from $\R^{K+2m+1}$. It is therefore a semialgebraic set of
dimension at most $K + 2m + 1$. Whenever $K(m-2) > 2m$,
\begin{equation}
\dim \mathcal{R} \;\le\; K+2m+1 \;<\; K(m-1)+1 \;=\; \dim \Poly_{K(m-1)},
\label{eq:thin}
\end{equation}
so $\mathcal{R}$ is a strict subset of $\Poly_{K(m-1)}$ of Lebesgue measure
zero within it.
\end{proposition}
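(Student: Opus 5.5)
The plan has three steps: identify $\mathcal{R}$ with the image of an explicit polynomial parameter map, bound the dimension of that image, and compare it with $\dim\Poly_{K(m-1)}$.

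\emph{Parameter map.} By Proposition~\ref{prop:tse} the TSE module realizes every $t\in\Poly_K$, through the $K+1$ coefficients $a_k$ and $s$. The proof of Theorem~\ref{thm:collapse} shows that the CG output is $q_{m-1}(t)$. Unwinding the recursion there, each coefficient of $q_{m-1}$ is a polynomial in $(\alpha_0,\dots,\alpha_{m-1},\beta_0,\dots,\beta_{m-2})$. Indeed, $\xvec_k$, $\mathbf{r}_k$ and $\mathbf{p}_k$ are polynomials in $\Amat_K$ applied to $\yvec$, and their coefficients are updated by multiplication and addition with the $\alpha_k,\beta_k$ only. So I would define
\[
\Phi:(a_0,\dots,a_K,s,\alpha,\beta)\mapsto \text{coefficient vector of } q_{m-1}\circ t .
\]
The parameter count is $(K+1)+1+m+(m-1)=K+2m+1$. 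The only non-polynomial ingredient is the factor $1/s^{k+1}$. To remove it, I would reparameterize by $c_k=a_k/s^{k+1}$, which is a bijection for $s\neq0$. Alternatively, I would drop $s$ altogether, since Proposition~\ref{prop:tse} shows $\Poly_K$ is already reached with $s$ fixed; this even lowers the count by one. Either way $\mathcal{R}$ is contained in the image of a polynomial (or, on $s\ne0$, rational) map $\R^{K+2m+1}\to\R^{K(m-1)+1}$. It is in fact exactly that image: the coefficients of $q_{m-1}\circ t$ are polynomials in the coefficients of $q_{m-1}$ and of $t$, which gives the stated description of $\mathcal{R}$.

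\emph{Dimension bound.} By Tarski--Seidenberg, $\mathcal{R}$ is semialgebraic. Its dimension is at most that of the domain, since semialgebraic dimension does not increase under semialgebraic maps. More elementarily, $\Phi$ is $C^1$, so its image is a countable union of images of compact sets under a smooth map from a space of lower dimension. Sard's theorem, or simply the Lipschitz-image argument, then shows that this image has Lebesgue measure zero in the higher-dimensional target.

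\emph{Inequality and conclusion.} The inequality is the arithmetic check $K+2m+1<K(m-1)+1\iff 2m<K(m-2)$, which is exactly the hypothesis. Measure zero then gives strictness, because $\Poly_{K(m-1)}\cong\R^{K(m-1)+1}$ has positive measure. Containment $\mathcal{R}\subseteq\Poly_{K(m-1)}$ is Theorem~\ref{thm:collapse}.

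The main obstacle is being careful that the image really is a polynomial image despite the $s^{-(k+1)}$ factors, and that the domain dimension counts every learned scalar correctly. The reparameterization above is how I would handle both. The rest reduces to standard facts: semialgebraic dimension, or a Lipschitz map from $\R^d$ into $\R^n$ with $d<n$ having a null image.
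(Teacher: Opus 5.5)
Your proof is correct and follows the paper's own argument: you reparameterize by $a_k/s^{k+1}$ so the realized set is the image of a polynomial map on $\R^{K+2m+1}$, use that a semialgebraic image has dimension at most that of its domain, and reduce the strict inequality to $K(m-2)>2m$; your Sard/Lipschitz alternative and the remark that $s$ may be fixed are valid extras. One small overstatement: identifying the realized set with all of $\{q\circ t : \deg q\le m-1,\ \deg t\le K\}$ requires every $q\in\Poly_{m-1}$ to arise from some choice of $(\alpha,\beta)$, which your composition remark does not show, but the measure-zero conclusion holds for either set, since each is a polynomial image of a space of dimension at most $K+2m+1$.
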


\begin{proof}
With $\tilde{a}_k = a_k/s^{k+1}$ (a bijection of $\R^{K+1}$ for each
$s \neq 0$), $\mathcal{R}$ is the image of $(\tilde{a},s,\alpha,\beta) \in
\R^{K+2m+1}$ under a map with polynomial coordinates, hence semialgebraic with
dimension at most that of its domain. Inequality \eqref{eq:thin} is
$K+2m < K(m-1)$, i.e. $K(m-2) > 2m$. A semialgebraic subset of a vector space
of strictly smaller dimension has measure zero.
\end{proof}

At the setting of Section~\ref{sec:numerics} ($K=10$, $m=3$) this is a set of
dimension at most $17$ in a $21$-dimensional space: the learned coefficients
cannot cover $\Poly_{20}$, and training does not change that.

\begin{remark}[Scope: fixed steps, shared across inputs]
\label{rem:adaptive}
Theorem~\ref{thm:collapse} and Proposition~\ref{prop:measure} concern the
unrolled regime, where $\{\alpha_k\},\{\beta_k\}$ are constants shared across
inputs. Textbook CG, whose steps vary with $\yvec$ through
the residual, is a genuine solver and not the subject of this paper.
\end{remark}

The start $\xvec_0 = 0$ is convenience, not substance: any input-independent
$\xvec_0$, random or learned, adds $Q(\Psimat)\xvec_0$ to \eqref{eq:collapse}
with $Q = 1 - t_K P$, $\deg Q \le Km$ --- still polynomials in $\Psimat$. Since
$\Psimat$ is symmetric the map is determined by $P$ on
$\mathrm{spec}(\Psimat)$, and stacking $L$ blocks with a common $\Psimat$
raises the bound to $LK(m-1)$ without leaving the polynomial class.

For $\ell_1$ graph priors such as graph total variation, each solver block
remains subject to Theorem~\ref{thm:collapse}: the class grows only through
the pointwise shrinkage between blocks, not the learned solver coefficients. For unrollings that run CG directly on $\Imat + \mu\Lmat$
without the expansion, Theorem~\ref{thm:collapse} still confines each block to
polynomials of degree at most $m-1$ in that layer's operator; the thinness of
Proposition~\ref{prop:measure} and the floor of Proposition~\ref{prop:leak}
are specific to the nested construction.

\section{The Initialization Is a Leaky Denoiser}
\label{sec:leak}

\begin{proposition}[Spectral response at initialization]
\label{prop:leak}
At $a_k = (-1)^k$ and $s = 1$ the TSE module realizes
$t_K(\lambda) = \bigl(1-(1-\lambda)^{K+1}\bigr)/\lambda$. If the CG module
solves exactly, the end-to-end response is
\begin{equation}
g_K(\lambda) = \frac{\lambda}{1-(1-\lambda)^{K+1}},
\qquad \lambda \in (0,1],
\label{eq:gain}
\end{equation}
which is increasing, satisfies $g_K(1) = 1$ and $g_K(\lambda) > \lambda$ for
$\lambda < 1$, and obeys $g_K(\lambda) \to 1/(K+1)$ as $\lambda \to 0^{+}$.
Consequently the deviation from the target response is bounded by
\begin{equation}
\bigl\| \xvec_{\mathrm{init}} - \Psimat\yvec \bigr\|_2
\;\le\; \frac{1}{K+1}\,\|\yvec\|_2 ,
\label{eq:bound}
\end{equation}
and the bound is approached as $\lambda_{\min} \to 0$.
\end{proposition}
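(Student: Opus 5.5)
The plan is to work entirely on the spectrum of $\Psimat$. The first step computes $t_K$ at the initialization. With $a_k=(-1)^k$ and $s=1$, \eqref{eq:tse} becomes
\[
t_K(\lambda)=\sum_{k=0}^K(1-\lambda)^k .
\]
This is a finite geometric series with ratio $1-\lambda$. For $\lambda\neq 0$ it sums to $\bigl(1-(1-\lambda)^{K+1}\bigr)/\lambda$, and on $(0,1]$ it is strictly positive, so $t_K(\Psimat)$ is symmetric positive definite. Exact solution of $t_K(\Psimat)\xvec=\yvec$ therefore gives $\xvec_{\mathrm{init}}=t_K(\Psimat)^{-1}\yvec$. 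Since $\Psimat$ is symmetric, this map acts on each eigenvector by $1/t_K(\lambda)=g_K(\lambda)$, which is \eqref{eq:gain}.

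The second step establishes the properties of $g_K$. Set $u=1-\lambda\in[0,1)$.
\begin{itemize}
\item \emph{Value at $1$:} $g_K(1)=1/t_K(1)=1$, because only the $k=0$ term survives.
\item \emph{Monotonicity:} each term $(1-\lambda)^k$ is nonincreasing in $\lambda$, and the $k\ge1$ terms are strictly decreasing. Hence $t_K$ is strictly decreasing on $(0,1]$ for $K\ge1$, and $g_K=1/t_K$ is increasing. This avoids differentiating the closed form.
\item \emph{Comparison with $\lambda$:} $g_K(\lambda)>\lambda$ is equivalent to $1-u^{K+1}<1$, i.e.\ $u^{K+1}>0$, which holds for $\lambda<1$.
\item \emph{Limit at $0$:} as $\lambda\to0^+$, $t_K(\lambda)\to K+1$, so $g_K(\lambda)\to 1/(K+1)$.
\end{itemize}

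The third step is the norm bound \eqref{eq:bound}. By symmetry,
\[
\bigl\|\xvec_{\mathrm{init}}-\Psimat\yvec\bigr\|_2\le \max_{\lambda\in\mathrm{spec}(\Psimat)}|g_K(\lambda)-\lambda|\,\|\yvec\|_2,
\]
so it suffices to show $0\le g_K(\lambda)-\lambda\le 1/(K+1)$ on $(0,1]$. Write
\[
g_K(\lambda)-\lambda=\lambda\,\frac{u^{K+1}}{1-u^{K+1}}=\lambda\,\frac{u^{K+1}}{\lambda\sum_{k=0}^K u^k}=\frac{u^{K+1}}{\sum_{k=0}^K u^k}.
\]
I will bound this by pairing terms. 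Because $u\le1$, we have $u^{K+1}\le u^k$ for each $k\le K$. Summing these $K+1$ inequalities gives $(K+1)u^{K+1}\le\sum_{k=0}^K u^k$. Hence the deviation is at most $1/(K+1)$, with equality only in the limit $u\to1$, i.e.\ $\lambda\to0$. Since $u\mapsto u^{K+1}/\sum_k u^k$ is increasing (each ratio $u^{K+1-k}$ increases), the maximum over the spectrum is attained at $\lambda_{\min}$. Choosing $\yvec$ to be the corresponding eigenvector shows the bound is approached as $\lambda_{\min}\to0$.

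I expect the main obstacle to be this last algebraic simplification, that is, finding a form of $g_K-\lambda$ in which the $1/(K+1)$ bound is transparent. The rewriting $1-u^{K+1}=\lambda\sum_k u^k$ settles it. A secondary point is the case $K=0$: then $t_0\equiv1$, so $g_0\equiv1$, which is constant rather than strictly increasing. I would state that "increasing" is meant in the weak sense in this case, or assume $K\ge1$.
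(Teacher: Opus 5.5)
Your proof is correct and takes essentially the same route as the paper. You reduce to the geometric sum $t_K=\sum_{k=0}^{K}u^k$ with $u=1-\lambda$, read off monotonicity, $g_K(1)=1$, $g_K>\lambda$ and the limit from it, and use $1-u^{K+1}=\lambda\sum_{k}u^k$ to write the deviation as $u^{K+1}/\sum_{k=0}^{K}u^k$; your termwise bound $(K+1)u^{K+1}\le\sum_{k=0}^{K}u^k$ and your explicit eigenvector choice for sharpness are only minor, slightly more direct variants of the paper's ``increasing in $u$ with limit $1/(K+1)$'' argument.
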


\begin{proof}
At $a_k = (-1)^k$, $s=1$, \eqref{eq:tse} telescopes to
$t_K = \sum_{k=0}^{K} u^k$ with $u = 1-\lambda$, giving \eqref{eq:gain} on
inversion. That sum increases in $u$ and $u$ decreases in $\lambda$, so $g_K$
increases; $g_K(1)=1$; and $0 < 1-u^{K+1} < 1$ for $\lambda<1$ gives
$g_K(\lambda) > \lambda$. Using $1-u^{K+1} = \lambda\sum_{k\le K}u^k$,
\begin{equation}
g_K(\lambda) - \lambda = \frac{\lambda u^{K+1}}{1-u^{K+1}}
= \frac{u^{K+1}}{\sum_{k=0}^{K} u^{k}},
\label{eq:dev}
\end{equation}
which increases in $u$ on $[0,1)$ with limit $1/(K+1)$, so it is bounded by
$1/(K+1)$ throughout. Symmetry of $\Psimat$ turns this into \eqref{eq:bound}.
\end{proof}

\begin{corollary}[Truncation order]
\label{cor:order}
The relative error of $t_K$ as an approximation of $1/\lambda$ is
$(1-\lambda)^{K+1}$, so accuracy $\varepsilon$ uniformly over
$\mathrm{spec}(\Psimat)$ requires
\begin{equation}
K + 1 \;\ge\; \frac{\log(1/\varepsilon)}{-\log(1-\lambda_{\min})}
\;\gtrsim\; \kappa(\Psimat)\log(1/\varepsilon).
\label{eq:Kreq}
\end{equation}
\end{corollary}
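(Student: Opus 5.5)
The plan is to read the relative error off the closed form already obtained in Proposition~\ref{prop:leak}, reduce the uniform requirement to the worst eigenvalue, and then convert $\lambda_{\min}$ into $\kappa$ with Lemma~\ref{lem:kappa}.

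\emph{Relative error.} At the initialization, Proposition~\ref{prop:leak} gives $t_K(\lambda) = \bigl(1-(1-\lambda)^{K+1}\bigr)/\lambda$. Hence
\[
\frac{1/\lambda - t_K(\lambda)}{1/\lambda} = 1 - \lambda t_K(\lambda) = (1-\lambda)^{K+1},
\]
which is the stated relative error. For $\lambda \in (0,1]$ this quantity lies in $[0,1)$ and is decreasing in $\lambda$.

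\emph{Worst case.} Requiring relative error at most $\varepsilon$ on all of $\mathrm{spec}(\Psimat) \subset (0,1]$ is therefore equivalent to requiring it at $\lambda_{\min}$ alone. The condition is $(1-\lambda_{\min})^{K+1} \le \varepsilon$. Take logarithms and divide by $\log(1-\lambda_{\min}) < 0$, which flips the inequality. This gives the first inequality in \eqref{eq:Kreq}, $K+1 \ge \log(1/\varepsilon)/(-\log(1-\lambda_{\min}))$. The degenerate case $\lambda_{\min} = 1$ has zero error and needs no treatment.

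\emph{Converting to $\kappa$.} I would use the elementary bound $-\log(1-x) \ge x$ on $[0,1)$, together with $-\log(1-x) = x + O(x^2)$ as $x \to 0$. The bound gives
\[
\frac{1}{-\log(1-\lambda_{\min})} \le \frac{1}{\lambda_{\min}}.
\]
By Lemma~\ref{lem:kappa}, $\lambda_{\max} = 1$, so $1/\lambda_{\min} = \kappa(\Psimat)$. The bound alone yields only an upper estimate on the required order. The lower estimate, with its asymptotic constant, comes from the expansion: in the regime of small $\lambda_{\min}$ (large $\kappa$), $-\log(1-\lambda_{\min}) \sim 1/\kappa$. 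Substituting this gives $K+1 \gtrsim \kappa\log(1/\varepsilon)$.

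The main obstacle is purely interpretive. The $\gtrsim$ must be read as an asymptotic statement in the ill-conditioned regime, not as a uniform inequality, because $-\log(1-x) \ge x$ points the wrong way for a strict lower bound. I would make this explicit using a two-sided estimate on $[0,\tfrac12]$:
\[
x \le -\log(1-x) \le x + x^2 .
\]
This gives $K+1 \ge \log(1/\varepsilon)/(\lambda_{\min}(1+\lambda_{\min}))$. With $\lambda_{\min} = 1/\kappa$, that is $K+1 \ge \kappa\log(1/\varepsilon)/(1+1/\kappa)$, which is $\gtrsim \kappa\log(1/\varepsilon)$ up to a factor tending to $1$.
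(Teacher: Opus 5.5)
Your proof is correct and follows the route the paper leaves implicit: read $1-\lambda t_K(\lambda)=(1-\lambda)^{K+1}$ off Proposition~\ref{prop:leak}, take the worst case at $\lambda_{\min}$, and pass to $\kappa$ via $\lambda_{\max}=1$ and $-\log(1-\lambda_{\min})\approx\lambda_{\min}$. You also make precise what the paper states loosely: since $-\log(1-x)\ge x$, the exact threshold is at most $\kappa\log(1/\varepsilon)$, so the $\gtrsim$ can only hold up to a factor tending to $1$, and your factor $1/(1+1/\kappa)$ quantifies that.
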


\begin{figure}[!t]
\centering
\includegraphics[width=0.65\textwidth]{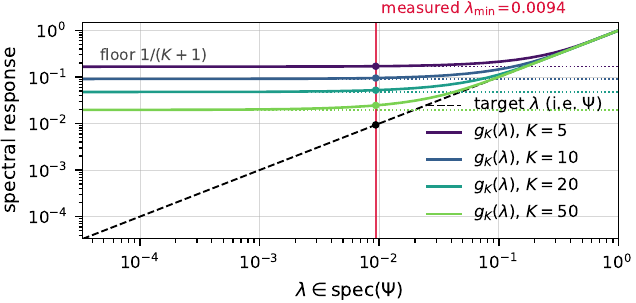}
\caption{Spectral response at initialization. The identity $\lambda$ (the
intended response $\Psimat$), the realized response $g_K(\lambda)$ of
\eqref{eq:gain} for several $K$, and the floor $1/(K+1)$. The measured
$\lambda_{\min}$ of Section~\ref{sec:numerics} is marked.}
\label{fig:gain}
\end{figure}

\begin{figure}[!t]
\centering
\includegraphics[width=0.72\textwidth]{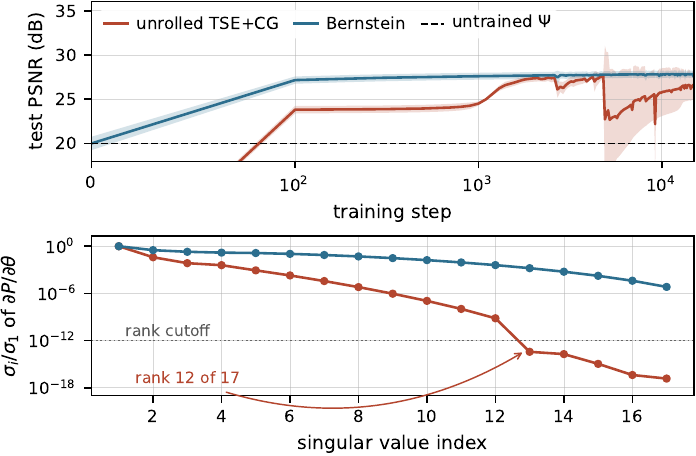}
\caption{Training dynamics at an equal budget of $17$ parameters, three
seeds, each arm at its own best learning rate. \emph{Top:} test PSNR against
step (symlog); dashed line is the untrained $\Psimat$. The arms converge to
nearly the same value, as Theorem~\ref{thm:collapse} implies they must, but
an order of magnitude apart in steps, and the unrolled arm shows abrupt
late-training excursions. That arm enters from below the plotted range: at
initialization its $m=3$ CG module is far from an exact solve ($2.4$~dB at
step $0$), so the response there is not $g_K$.
\emph{Bottom:} singular values of the parameter-to-response Jacobian at
initialization, relative to the largest; the unrolled parameterization is
numerically rank $12$ of $17$, the Bernstein one full rank.}
\label{fig:train}
\end{figure}

\section{A Bernstein Parameterization}
\label{sec:bernstein}

\begin{proposition}
\label{prop:bernstein}
Let $P(\lambda) = \sum_{j=0}^{d} c_j \binom{d}{j}\lambda^j(1-\lambda)^{d-j}$.
Then: (i) as $\{c_j\}$ ranges over $\R^{d+1}$, $P$ spans all of $\Poly_d$;
(ii) if $c_j \in [0,1]$ for all $j$ then $P(\Psimat)$ is positive semidefinite
and $\|P(\Psimat)\|_2 \le 1$, for every $\Psimat$ with
$\mathrm{spec}(\Psimat) \subset [0,1]$; and (iii) $c_j = j/d$ gives
$P(\lambda) = \lambda$ exactly, i.e. the target $\Psimat\yvec$ lies in the
hypothesis class at zero approximation error.
\end{proposition}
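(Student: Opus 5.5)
The three parts are independent, and I would prove them in order using standard facts about the Bernstein basis $b_{j,d}(\lambda) = \binom{d}{j}\lambda^j(1-\lambda)^{d-j}$.

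For (i), I would show that $\{b_{j,d}\}_{j=0}^{d}$ is a basis of $\Poly_d$. There are $d+1$ of them and $\dim\Poly_d = d+1$, so it suffices to prove linear independence. Expand $b_{j,d}$ in powers of $\lambda$: its lowest-order term is $\binom{d}{j}\lambda^j$ with a nonzero coefficient. The coefficient matrix against the monomials $\{\lambda^i\}$ is therefore triangular with nonzero diagonal, hence invertible. This mirrors the distinct-degree argument of Proposition~\ref{prop:tse}, applied at the low end.

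For (ii), the step I would set up most carefully is the reduction to a scalar statement. Since $\Psimat$ is symmetric, diagonalize it as $\Psimat = \mathbf{U}\boldsymbol{\Lambda}\mathbf{U}^\top$. Then $P(\Psimat) = \mathbf{U}P(\boldsymbol{\Lambda})\mathbf{U}^\top$ is symmetric, and its eigenvalues are $P(\lambda_i)$ with $\lambda_i \in [0,1]$. It therefore suffices to show $0 \le P(\lambda) \le 1$ for $\lambda\in[0,1]$. On that interval each $b_{j,d}(\lambda) \ge 0$, and the binomial theorem gives the partition of unity
\[
\sum_j b_{j,d}(\lambda) = \bigl(\lambda + (1-\lambda)\bigr)^d = 1 .
\]
Hence $P(\lambda)$ is a convex combination of the $c_j \in [0,1]$ and lies in $[0,1]$. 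For a symmetric matrix this gives positive semidefiniteness, and the spectral norm equals $\max_i |P(\lambda_i)| \le 1$.

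For (iii), I would verify the identity $\sum_j (j/d)\, b_{j,d}(\lambda) = \lambda$ directly. Use $\frac{j}{d}\binom{d}{j} = \binom{d-1}{j-1}$ for $j\ge1$; the $j=0$ term vanishes. Then
\[
\sum_{j\ge1}\binom{d-1}{j-1}\lambda^{j}(1-\lambda)^{d-j} = \lambda\sum_{i=0}^{d-1}\binom{d-1}{i}\lambda^{i}(1-\lambda)^{d-1-i} = \lambda,
\]
again by the binomial theorem. Substituting $\Psimat$ gives $P(\Psimat)\yvec = \Psimat\yvec$. Since each $c_j = j/d \in [0,1]$, this target also lies inside the constrained box of (ii).

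None of the steps is a real obstacle. The point needing most care is in (ii): nonexpansiveness uses the hypothesis $\mathrm{spec}(\Psimat)\subset[0,1]$, which \eqref{eq:spec} guarantees, because the Bernstein basis is nonnegative only on $[0,1]$.
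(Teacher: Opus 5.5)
Your proposal is correct and follows the paper's proof essentially step for step. Part (i) uses the triangular change of basis via the lowest-order term $\binom{d}{j}\lambda^j$; part (ii) uses nonnegativity and the partition of unity to make $P(\lambda)$ a convex combination of the $c_j$, transferred through the spectrum of the symmetric $\Psimat$; and in part (iii) your identity $\frac{j}{d}\binom{d}{j}=\binom{d-1}{j-1}$ is the computation behind the paper's appeal to $\mathbb{E}[X]/d=\lambda$ for $X\sim\mathrm{Binomial}(d,\lambda)$, so the difference there is only presentational.
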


\begin{proof}
(i) Conversion to the monomial basis is triangular with nonzero diagonal,
$b_{j,d}$ having lowest term $\binom{d}{j}\lambda^j$~\cite{lorentz1953}.
(ii) On $[0,1]$ the $b_{j,d}$ are nonnegative and
$\sum_j b_{j,d}(\lambda) = (\lambda+(1-\lambda))^d = 1$, so $P(\lambda)$ is a
convex combination of the
$c_j$ and lies in $[0,1]$; as $\Psimat$ is symmetric,
$\mathrm{spec}(P(\Psimat)) = P(\mathrm{spec}(\Psimat)) \subset [0,1]$, which
is PSD together with $\|P(\Psimat)\|_2 \le 1$. (iii) For
$X \sim \mathrm{Binomial}(d,\lambda)$, $\Pr[X=j] = b_{j,d}(\lambda)$, so
$\sum_j (j/d)\,b_{j,d}(\lambda) = \mathbb{E}[X]/d = \lambda$.
\end{proof}

Applications of the black-box $\Psimat$ dominate the cost. Any $P \in \Poly_d$ costs $d$ applications through the Krylov basis
$\{\Psimat^k\yvec\}_{k \le d}$, whatever basis expresses its coefficients,
whereas an unrolled block spends $\Theta(Km)$ to reach degree at most
$K(m-1)$: matching degrees costs no more, and the saving is not the point. By Proposition~\ref{prop:measure} the unrolled block reaches a
measure-zero subset of $\Poly_{K(m-1)}$, closed under scaling and under
addition of constants but not under addition, while the Bernstein form has
all of $\Poly_d$. The comparison is thus not ``the same class more cheaply'' but ``a
strictly larger reachable set at no greater cost, with the guarantee of
Proposition~\ref{prop:bernstein}(ii) on demand''. That guarantee is not free:
$c \in [0,1]^{d+1}$ is sufficient for $P([0,1]) \subseteq [0,1]$ but not
necessary, so it forgoes part of $\Poly_d$; Section~\ref{sec:numerics}
measures the cost.

\begin{remark}[Both are exact solves, of different problems]
\label{rem:map}
By the correspondence of~\cite{chan2019}, any symmetric, positive definite,
nonexpansive $P(\Psimat)$ --- in particular every member of the class of
Proposition~\ref{prop:bernstein}(ii) with positive coefficients --- is the
exact MAP solution for the prior $\Lmat' = \mu^{-1}(P(\Psimat)^{-1}-\Imat)$
--- the same reading of a linear denoiser as a graph filter that
\cite{gdd2024,thm1source2023} quote as Theorem~1. A
truncated-CG block approximately solves the stated problem; the direct filter
exactly solves a sibling problem: neither carries more MAP semantics.
\end{remark}

Linearity also fixes the objective: the output is linear in $c$, so a
squared-error loss is convex in $c$ and stays convex under the box of
Proposition~\ref{prop:bernstein}(ii), its minimizer a (constrained)
least-squares solution. The unrolled objective is not convex, its parameters
entering through a composition. This gives the ablation an optimizer-free
reference.

\begin{remark}[Trainability]
\label{rem:train}
The parameterizations also differ in conditioning. The derivative of the
response with respect to an inner coefficient $a_k$ carries a factor
$q_{m-1}'(t_K(\lambda))$ evaluated where $t_K(\lambda) \approx 1/\lambda$, so
it spans a range widening with $\kappa(\Psimat)$, and rescalings of $t_K$
absorbed by $q_{m-1}$ leave the response unchanged. This is reported as an observation, measured in
Section~\ref{sec:numerics}, not a claim about what optimization does.
\end{remark}

\section{Numerical Results}
\label{sec:numerics}

Non-overlapping $64\times64$ grayscale patches from the 300 images of
TAMPERE17~\cite{ponomarenko2018}; $\Psimat$ from bilateral-style weights on
$(x, y, I, \partial_x I, \partial_y I)$ over a window of radius $r = 3$,
symmetrically normalized, $\mu = 100$; medians over 256 patches at a fixed
seed. \emph{Conditioning:} $\lambda_{\min} = 9.37\times10^{-3}$ and
$\kappa(\Psimat) = 107$ (spread $[104,107]$), so by \eqref{eq:Kreq} accuracy
$\varepsilon = 10^{-2}$ needs $K \ge 488$. By \eqref{eq:kappa} that size rests
on the sign of $w_{\min}$, which the graph construction alone decides.

\begin{proposition}
\label{prop:wmin}
Let the spatial weights be $e^{-a(p-q)^2}$ truncated to $|p-q| \le r$, and set
\begin{equation}
\sigma_r(\theta) = 1 + 2\textstyle\sum_{k=1}^{r} e^{-ak^2}\cos(k\theta).
\label{eq:symbol}
\end{equation}
If $\sigma_r(\pi) < 0$ then, for all $n$ above an explicit $O(1)$ threshold,
$w_{\min} < 0$ and hence $\kappa(\Psimat) > 1 + \mu$.
\end{proposition}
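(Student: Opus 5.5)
Since $w_{\min}$ is the smallest eigenvalue of the symmetric matrix $\mathbf{W}$, the plan is to show $w_{\min}<0$ with a Rayleigh quotient. Lemma~\ref{lem:kappa} then gives $\kappa(\Psimat) = 1+\mu(1-w_{\min}) > 1+\mu$ directly. In the shift-invariant case the weight matrix $\mathbf{B}$ on a path (or grid) of $n$ nodes is the truncated Toeplitz matrix with entries $e^{-a(p-q)^2}$ for $|p-q|\le r$. Its symbol is exactly $\sigma_r(\theta)$ of \eqref{eq:symbol}, because the diagonal entry is $1$ and each off-diagonal band $k$ contributes $2e^{-ak^2}\cos(k\theta)$. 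The natural test vector is the alternating vector at the Nyquist frequency, since the symbol is evaluated at $\theta=\pi$.

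Concretely, I will take $\mathbf{z} = \mathbf{S}^{1/2}\mathbf{v}$ with $v_p = (-1)^p$. Then
\[
\mathbf{z}^\top\mathbf{W}\mathbf{z} = \mathbf{v}^\top\mathbf{B}\mathbf{v} = \sum_{p,q} B_{pq}(-1)^{p-q} = \sum_{p}\Bigl(1 + \sum_{1\le|k|\le r,\ p+k\in[1,n]} e^{-ak^2}(-1)^k\Bigr).
\]
Every interior node, meaning $r<p\le n-r$, contributes exactly $\sigma_r(\pi)<0$. Each of the at most $2r$ boundary nodes contributes a bounded amount, at most $C_r := 1+\sum_{k=1}^r e^{-ak^2}$ in absolute value. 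Hence
\[
\mathbf{v}^\top\mathbf{B}\mathbf{v} \le (n-2r)\,\sigma_r(\pi) + 2r\,C_r,
\]
which is negative as soon as $n > 2r + 2rC_r/|\sigma_r(\pi)|$. This is the explicit $O(1)$ threshold, independent of $n$ and depending only on $a$ and $r$. Since $\mathbf{z}\neq 0$ and $\mathbf{S}$ is positive, the Rayleigh quotient is negative and therefore $w_{\min}<0$.

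For a 2-D grid, or for product-form weights such as the bilateral weights restricted to a constant patch, I will use the checkerboard vector $(-1)^{p_1+p_2}$ and the same split into interior and boundary nodes. The interior contribution then factors into $\sigma_r(\pi)\cdot\sigma_r(\pi)$ if the kernel is separable. Because that product is positive, I will instead choose $(-1)^{p_1}$, alternating along only one axis. This makes the interior term $\sigma_r(\pi)\sigma_r(0)<0$, since $\sigma_r(0)>0$.

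The main obstacle is the boundary bookkeeping. I need the count of non-interior nodes, of order $rn^{d-1}$ in dimension $d$, to be dominated by the interior count times $|\sigma_r(\pi)|$. This is what forces the threshold on $n$, and it must be tracked so the threshold is stated explicitly. The normalization by $\mathbf{S}$ is harmless, since it is absorbed by the choice of $\mathbf{z}$. Strictness of the final inequality follows from the strict negativity of the Rayleigh quotient.
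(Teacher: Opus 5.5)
Your proposal is correct and is essentially the paper's proof: $\mathbf{z}=\mathbf{S}^{1/2}\mathbf{v}$ is the paper's Sylvester-inertia reduction to $\mathbf{v}^{\top}\mathbf{B}\mathbf{v}<0$ made explicit, and the test vector is the same alternating one. In 2-D, your one-axis vector is exactly the paper's Kronecker step, because $\mathbf{S}^{1/2}(\mathbf{v}\otimes\mathbf{1})=\mathbf{D}^{1/2}\mathbf{v}\otimes\mathbf{D}^{1/2}\mathbf{1}$ pairs the 1-D test vector with the Perron vector of $\mathbf{V}$, so the 2-D form equals $(\mathbf{v}^{\top}\mathbf{T}\mathbf{v})(\mathbf{1}^{\top}\mathbf{T}\mathbf{1})$ and the $rn^{d-1}$ boundary bookkeeping you anticipate never arises.

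The one real difference is the counting: the paper sums along diagonals, $\#\{(p,q):p-q=k\}=n-|k|$, which gives the exact value $n\sigma_r(\pi)-\sum_{|k|\le r}|k|e^{-ak^2}(-1)^k$ and the threshold $n\ge 6$ at $a=1/9$, $r=3$. Your per-node bound $2rC_r\approx 17$ (the boundary nodes actually sum to about $-0.03$) instead gives $n\ge 78$, which is still an explicit $O(1)$ threshold and so proves the statement, but is too crude to cover the $n=64$ patches used in the numerics.
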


\begin{proof}
$\mathbf{W} = \mathbf{S}^{-1/2}\mathbf{B}\mathbf{S}^{-1/2}$ is a congruence
of $\mathbf{B}$, so by Sylvester's law of inertia both have equally many
negative eigenvalues and it suffices to exhibit $x$ with
$x^{\!\top}\mathbf{B}x < 0$. Absent photometric variation the features reduce
to $(x,y)$, so weights and window factor:
$\mathbf{B} = \mathbf{T}\otimes\mathbf{T}$ and
$\mathbf{W} = \mathbf{V}\otimes\mathbf{V}$ with
$\mathbf{V} = \mathbf{D}^{-1/2}\mathbf{T}\mathbf{D}^{-1/2}$,
$\mathbf{D} = \mathrm{diag}(\mathbf{T}\mathbf{1})$; as
$\lambda_{\max}(\mathbf{V}) = 1$ by Lemma~\ref{lem:kappa},
$w_{\min} = \lambda_{\min}(\mathbf{V})$. Take $x_p = (-1)^p$; since
$\#\{(p,q): p-q = k\} = n - |k|$ and $x_p^2 = 1$,
\begin{equation}
\frac{x^{\!\top}\mathbf{T}x}{x^{\!\top}\mathbf{D}x}
= \frac{n\,\sigma_r(\pi) - \sum_{|k|\le r}|k|\,e^{-ak^2}(-1)^k}
       {\mathbf{1}^{\!\top}\mathbf{T}\mathbf{1}},
\label{eq:rayleigh}
\end{equation}
whose numerator turns negative once $n$ passes the threshold; it bounds
$\lambda_{\min}(\mathbf{V})$ above, and \eqref{eq:kappa} gives the claim.
\end{proof}

Three remarks fix the scope. First, an \emph{untruncated} Gaussian kernel
matrix is positive semidefinite, so a window is necessary for
$w_{\min} < 0$: by Poisson summation
$\sigma_\infty(\pi) = \sqrt{\pi/a}\,2e^{-\pi^2/4a}$, which at $a = 1/9$ is
$2.4\times10^{-9}$ --- positive, but so narrowly that dropping the tail beyond
$r = 3$ carries it to $\sigma_3(\pi) = -0.243$. It is the window, not the
weights. Second, $\sigma_r(\pi) < 0$ is sufficient but not necessary,
$\theta = \pi$ being one test direction; the sharp criterion
$\min_\theta \sigma_r(\theta) < 0$ matched $\mathrm{sign}(w_{\min})$ in all
$42$ configurations checked over $a^{-1/2} \in [0.3,10]$, $r \in [1,12]$, and
for bandwidths narrow relative to the window the graph is near-diagonal with
$w_{\min} > 0$. Third, the windowing is not a defect one could engineer away:
compactly supported positive-definite kernels~\cite{wendland1995} give
$w_{\min} \ge 0$ by construction, but by \eqref{eq:kappa} that lowers $\kappa$
only from $1+1.047\mu$ to $1+\mu$. The required order is the price of $\mu$,
not of kernel engineering.

Here \eqref{eq:rayleigh} gives $w_{\min} \le -0.0469$, so
$\kappa \ge 1 + 1.047\mu$ against the measured $107$ at $\mu = 100$.
Consistently, a \emph{constant} patch
reproduces the median $\kappa$ to within $0.34\%$, and a factor-$25$ change of
intensity bandwidth moves it by under $0.5\%$: the required order follows from
$\mu$ and the graph, not the dataset. Across two decades of $\mu$ --- $10$, $100$, $1000$ --- \eqref{eq:kappa} gives
$\kappa = 11.6$, $107$, $1058$ and, by \eqref{eq:Kreq}, required orders
$K \ge 50$, $488$, $4869$: the conclusion does not rest on the choice of
$\mu$.

\emph{Spectral response.} Fig.~\ref{fig:gain}. At the measured
$\lambda_{\min}$ and $K = 10$, the realized gain is
$g_{10}(\lambda_{\min}) = 9.52\times10^{-2}$ against a target of
$9.37\times10^{-3}$ --- a factor of $10.2$, within $5\%$ of the floor $1/(K{+}1)$.

\emph{Ablation.} Three arms denoise patches at $\sigma = 0.1$ on the unit
intensity range, equivalently $\sigma \approx 25.5$ at eight bits:
(a) untrained $\Psimat$; (b) the unrolled TSE+CG block at $K = 10$, $m = 3$,
learning $\{a_k\}, s, \{\alpha_k\}, \{\beta_k\}$ --- $17$ parameters; and
(c) a Bernstein polynomial of degree $16$, learning its $17$ coefficients from
the initialization $c_j = j/d$ of Proposition~\ref{prop:bernstein}(iii). Both
learned arms are polynomials in $\Psimat$ by Theorem~\ref{thm:collapse}, so
this compares two parameterizations of one class, not two architectures; each
gets its own best rate from a common sweep ($5\times10^{-2}$,
$2\times10^{-1}$, selected on training loss) and is trained to a plateau
($15\,000$ steps, $24$ training and $24$ held-out patches). Over
three seeds, each redrawing patches and noise,
\par\smallskip
\centerline{%
\begin{tabular}{lcc}
\hline
arm & parameters & test PSNR (dB) \\
\hline
(a) untrained $\Psimat$ & 0  & $20.00 \pm 0.76$ \\
(b) unrolled TSE+CG     & 17 & $26.65 \pm 1.67$ \\
(c) Bernstein           & 17 & $27.52 \pm 0.60$ \\
\hline
best $P \in \Poly_{20}$ & 21 & $28.08 \pm 0.42$ \\
\hline
\end{tabular}}
\par\smallskip\noindent
The last row is a reference, not an arm: the training-objective least-squares
optimum over the ambient class, scored on the test patches. Arm (c) lands
$0.6$~dB below it and arm (b) $1.4$~dB, their means
differing by less than the spread of arm (b) --- what
Theorem~\ref{thm:collapse} leads one to expect, and why the comparison below
is about optimization, not accuracy. The guarantee of
Proposition~\ref{prop:bernstein}(ii) is not free either: the same convex
problem over $c \in [0,1]^{17}$ gives $25.64 \pm 0.53$~dB against
$27.99 \pm 0.42$ unconstrained in the same class --- about $2.4$~dB, the box
being sufficient for $P([0,1]) \subseteq [0,1]$ but not necessary.

\emph{Training dynamics.} What separates the arms is how they get there, not
where they arrive. Arm (c) reaches $25.65$~dB within $100$ steps for every
seed, against $1200$--$1300$ for arm (b) (Fig.~\ref{fig:train}, top). Arm (b)
is also less stable, across seeds ($\pm 1.67$ against $\pm 0.60$~dB) and
across the learning rate: over the grid
$\{5\!\times\!10^{-3},\dots,2\!\times\!10^{-1}\}$ its final PSNR moves by
$3.95$~dB against $0.29$~dB for arm (c).

Gradient norms are not
comparable across arms, each carrying the units of its own parameters, but the
Jacobian of $\theta \mapsto P(\cdot)|_{\mathrm{spec}(\Psimat)}$ is. At
initialization (Fig.~\ref{fig:train}, bottom) its singular values for arm (c)
span $10^{5}$ at full rank $17$, whereas for arm (b) they fall below relative
$10^{-12}$ past the twelfth: seventeen parameters purchase twelve usable
directions, the twelve $(\{a_k\},s)$ carrying the eleven degrees of freedom of
$t_K$ and the five $(\{\alpha_k\},\{\beta_k\})$ the three of $q_{m-1}$;
invariances of the composition, such as a rescaling of $t_K$ absorbed by
$q_{m-1}$, reduce those fourteen to the observed twelve. Rank
$12$ of $17$ sits strictly inside the dimension bound \eqref{eq:thin} of
Proposition~\ref{prop:measure}, and is Remark~\ref{rem:train} made
quantitative.

\section{Conclusion}

Unrolling a truncated expansion inside an unrolled solver produces a
polynomial graph filter of degree at most $K(m-1)$ at every point of training,
with a reachable set measure-zero within even that class. The destination is
fixed by the operator, as Theorem~\ref{thm:collapse} leads one to expect; the
parameterizations differ in the path --- conditioning, not capacity: an order of
magnitude in steps, a rank-$12$-of-$17$ Jacobian, near-equal accuracy.
Architectures re-estimating the graph per block compose filters in
\emph{varying} operators --- an open class.

\bibliographystyle{IEEEtran}
\bibliography{refs}

\end{document}